\newtheorem{thm}{Theorem}
\newenvironment{proof}{{\noindent{\bf Proof:}}}{$\hfill\Box$}
\def\NP{{\sf{NP}}}
\def\Tr{{\text{tr}}}
\def\T{{\text{T}}}
\def\SEP{{\text{SEP}}}
\def\PPT{{\text{PPT}}}
\begin{document}

\singlespacing

\title{Approximating the Set of Separable States Using the Positive Partial Transpose Test}

\author{Salman Beigi}
\affiliation{Institute for Quantum Information, California Institute of
Technology, Pasadena, CA 91125, USA}

\author{Peter W. Shor}
\affiliation{Department of Mathematics, Massachusetts Institute of
Technology, Cambridge, MA 02139, USA}

\begin{abstract}
The positive partial transpose test is one of the main criteria
for detecting entanglement, and the set of states with positive
partial transpose is considered as an approximation of the set of
separable states. However, we do not know to what extent this
criterion, as well as the approximation, are efficient. In this
paper, we show that the positive partial transpose test gives no
bound on the distance of a density matrix from separable states.
More precisely, we prove that, as the dimension of the space
tends to infinity, the maximum trace distance of a positive
partial transpose state from separable states tends to $1$. Using
similar techniques, we show that the same result holds for other
well-known separability criteria such as reduction criterion,
majorization criterion and symmetric extension criterion. We also
bring an evidence that the sets of positive partial transpose states
and separable states have totally different shapes.
\end{abstract}

\date{\today}

\maketitle




\section{Introduction}

The problem of detecting entanglement has been focused in quantum
information theory for many years. The problem is: given a
bipartite mixed state $\rho_{AB}$, decide whether this state is
entangled or separable. The first attack toward solving this
problem is the following observation due to Peres and the
Horodeckis \cite{peres, 23}. If $\rho_{AB}=\sum_i
p_i\,\rho_{A}^{(i)}\otimes \rho_{B}^{(i)} $ is separable, then
$(\rho_{AB})^{\T_{\!B}}=\sum_i p_i\rho_{A}^{(i)}\otimes (\rho_{B}^{(i)})^\T$,
where $M^{\T}$ denotes the transpose of matrix $M$, is also a
quantum state, and is a positive semi-definite matrix.
Therefore, if $\rho_{AB}$ is separable, its partial transpose,
$(\rho_{AB})^{\T_{\!B}}$, should be positive semi-definite. The
Horodeckis have proved that this criterion characterizes all
separable states in dimensions $2\times 2$ and $2\times 3$
\cite{23}. However, there are entangled states in dimension
$3\times 3$ with a positive partial transpose \cite{bound}.

Although the set of positive partial transpose states (PPT states)
does not coincide with the set of separable states, it is usually
considered as an approximation of this set. For example in
\cite{distance} instead of estimating the distance from separable states, the distance of an arbitrary state from PPT
states has been computed as an ``strongly related problem." Also in \cite{geometry} the geometry of the set of PPT
states has been studied to understand the properties of the set of
separable states. However, we do not know how efficient these
approximations are. For instance, given an upper bound on the
distance of a state from PPT states, does it give an upper bound
on the distance from separable states?

We can think of this problem from the point of view of complexity
theory. Gurvits \cite{gurvits} has proved that given a bipartite
density matrix $\rho_{AB}$, it is $\NP$-hard to decide whether
this state is separable or entangled. An approximate formulation
of this problem is the following: given a bipartite density matrix
$\rho_{AB}$ and $\epsilon>0$, decide whether there exists a
separable state in the $\epsilon$-neighborhood (in trace
distance) of $\rho_{AB}$. Gurvits has established a reduction
from Knapsack to this problem, and has proved the $\NP$-hardness
of the separability problem, but only for exponentially small
$\epsilon$. However, as mentioned in \cite{salman} by replacing
Knapsack with 2-out-of-4-SAT and repeating a similar argument, the $\NP$-hardness can be proved for an
inverse polynomial $\epsilon$. Also, Gharibian \cite{sevag} has
shown the same result using a reduction from the Clique problem.
Now the question is that how large $\epsilon$ can be while
getting to the $\NP$-hardness. For example, is there an efficient
algorithm to decide whether the distance of a given state from
separable states is less than $1/3$, or it is an $\NP$-hard
problem? Equivalently, is there an efficiently implementable separability test such that if
a state passes the test, then it is $1/3$-close to the set of
separable states?

In this paper we consider the converse of this question, i.e.
given a separability criterion, if a state passes this test, can we
claim a non-trivial upper bound on the distance of this state
from separable states? We prove that the answer
for the PPT criterion, as well as other well-known separability
tests such as reduction criterion \cite{reduction}, majorization
criterion \cite{majorization}, and symmetric extension criterion
\cite{symmetric1, symmetric2}, is no. More precisely, we prove the
following theorem.

\begin{thm}\label{thm:main} Let $\mathcal{H}$ be a bipartite
Hilbert space. For every $\varepsilon>0$, if the dimension of each subsystem of $\mathcal{H}$
is large enough, there exists a PPT state acting on $\mathcal{H}$
whose trace distance from separable states is at least
$1-\varepsilon$.

\end{thm}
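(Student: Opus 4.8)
The plan is to obtain the required state from an \emph{unextendible product basis}: this makes the positive partial transpose condition automatic, and the distance from the separable states will then follow by showing that the support of the state avoids all product vectors in a strong, quantitative way. Assume, as we may, that both subsystems have dimension $d$, so $\mathcal{H}=\mathbb{C}^{d}\otimes\mathbb{C}^{d}$; the general case follows by embedding $\mathbb{C}^{d}\otimes\mathbb{C}^{d}$, with $d=\min(d_A,d_B)$, inside $\mathcal{H}$.

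\textbf{Reduction.} It suffices to exhibit, for every $\epsilon>0$ and all large $d$, a PPT density matrix $\rho$ on $\mathcal{H}$ whose support $V=\operatorname{supp}(\rho)$ has small product overlap,
\begin{equation*}
g(V):=\max\bigl\{\,\langle a\otimes b|\,\Pi_{V}\,|a\otimes b\rangle \ :\ \|a\|=\|b\|=1\,\bigr\}\;<\;\epsilon ,
\end{equation*}
where $\Pi_{V}$ is the orthogonal projection onto $V$. Indeed, for any separable $\sigma=\sum_{k}p_{k}\,|a_{k}\rangle\langle a_{k}|\otimes|b_{k}\rangle\langle b_{k}|$, choosing $\Pi_V$ as the optimal measurement in the variational formula for the trace distance gives
\begin{equation*}
\tfrac12\|\rho-\sigma\|_{1}\;\geq\;\Tr\bigl(\Pi_{V}(\rho-\sigma)\bigr)\;=\;1-\sum_{k}p_{k}\,\langle a_{k}\otimes b_{k}|\,\Pi_{V}\,|a_{k}\otimes b_{k}\rangle\;\geq\;1-g(V)\;>\;1-\epsilon ,
\end{equation*}
which is exactly the conclusion of the theorem.

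\textbf{Making the PPT condition free.} Take an orthonormal family of product vectors $\{|a_{i}\rangle\otimes|b_{i}\rangle\}_{i=1}^{m}$ in $\mathcal{H}$, put $W=\operatorname{span}\{|a_{i}\otimes b_{i}\rangle\}$, $V=W^{\perp}$, $D=\dim V=d^{2}-m$, and let $\rho=\Pi_{V}/D$. Then $\rho$ is automatically PPT: from $\Pi_{V}=I-\sum_{i}|a_{i}\otimes b_{i}\rangle\langle a_{i}\otimes b_{i}|$ one gets $\Pi_{V}^{T_{B}}=I-\sum_{i}|a_{i}\otimes\bar b_{i}\rangle\langle a_{i}\otimes\bar b_{i}|$, and the family $\{|a_{i}\otimes\bar b_{i}\rangle\}$ is again orthonormal (for $i\neq j$, orthonormality of the original family forces $\langle a_{i}|a_{j}\rangle=0$ or $\langle b_{i}|b_{j}\rangle=0$, and in either case $\langle a_{i}|a_{j}\rangle\,\overline{\langle b_{i}|b_{j}\rangle}=0$), so $\Pi_{V}^{T_{B}}$ is $I$ minus a projection, hence positive. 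The problem is thereby reduced to a purely geometric one: construct an orthonormal product family whose span has a "completely entangled" complement, quantitatively $g(V)<\epsilon$.

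\textbf{The geometry, and where the work lies.} Since $g(V)\geq\Tr(\Pi_{V}/d^{2})=D/d^{2}$, one must take $D=o(d^{2})$, i.e. $m$ within $o(d^{2})$ of its maximal value. Conversely, writing $g(V)=1-\min_{a,b}f(a,b)$ with
\begin{equation*}
f(a,b):=\sum_{i=1}^{m}|\langle a|a_{i}\rangle|^{2}\,|\langle b|b_{i}\rangle|^{2},\qquad \mathbb{E}_{a,b}\,f(a,b)=\frac{m}{d^{2}},
\end{equation*}
we see that if the family is chosen so that $f(a,b)$ stays within $o(1)$ of $m/d^{2}$ uniformly over all unit $a,b$, then $g(V)=D/d^{2}+o(1)\to 0$ and the theorem follows. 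The main obstacle is building such a family. Orthonormality of $\{|a_{i}\otimes b_{i}\rangle\}$ is a rigid combinatorial constraint — for each pair one of the two local overlaps must vanish — so the $|a_{i}\rangle,|b_{i}\rangle$ cannot simply be sampled at random and controlled by concentration of measure; one has to start from a structured orthogonality pattern, realize it by product vectors, and only then inject enough randomness or symmetry to flatten $f$. I expect the heart of the proof to be precisely this: a construction of near-maximal orthogonal unextendible product bases together with the uniform estimate $|f(a,b)-m/d^{2}|=o(1)$, a \emph{robust} version of unextendibility. That something nontrivial is unavoidable is already visible for $D=1$: then $\rho$ is a pure state, and a pure state is PPT only if it is a product state, so $V$ cannot be completely entangled; hence $D$ must be bounded below while $g(V)$ is simultaneously controlled.

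\textbf{The other criteria.} Because a nearly flat $f$ makes the local marginals $\rho_{A},\rho_{B}$ close to the maximally mixed state, the operator inequalities defining the reduction and majorization criteria, as well as the existence of low-level symmetric extensions, can be checked for the same states (after a mild modification or an averaging over local unitaries, if needed) using the very frame estimates that bound $g(V)$; this yields the analogous conclusions — a state passing each of those tests yet lying at trace distance $>1-\epsilon$ from the separable states — with the same dimensional asymptotics.
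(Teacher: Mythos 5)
Your reduction is clean and correct as far as it goes: the variational bound $\frac12\|\rho-\sigma\|_{1}\geq\Tr(\Pi_V(\rho-\sigma))\geq 1-g(V)$ for separable $\sigma$ is right, and so is the observation that the complement of an orthonormal product family gives a projector whose partial transpose is $I$ minus another projector, hence a PPT state for free. But the proof stops exactly where the theorem lives. You need an orthonormal product family of size $m=d^{2}-o(d^{2})$ (since $g(V)\geq D/d^{2}$) whose complement $V$ is not merely completely entangled but \emph{quantitatively} so, with $\max_{a,b}\langle a\otimes b|\Pi_V|a\otimes b\rangle\to 0$. You say yourself that you "expect the heart of the proof to be precisely this" construction --- that is an accurate self-assessment, and no such construction is given. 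It is not a routine existence argument: orthogonality of product vectors imposes the rigid combinatorial pattern you describe, the known unextendible product bases are at the \emph{opposite} end of the size range (minimum size $2d-1$, complement of dimension $(d-1)^{2}$, which forces $g(V)\geq (d-1)^2/d^2\to 1$), and it is not established --- and to my knowledge not known --- that near-maximal orthogonal product families with robustly entangled complements exist at all. The final paragraph on the reduction, majorization, and symmetric-extension criteria is likewise a statement of hope rather than a proof; nothing in your construction gives a symmetric extension or the majorization inequality.

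The paper takes an entirely different route that sidesteps the need for any high-dimensional construction. It fixes a single entangled PPT state $\rho$ in dimension as small as $3\times 3$ and shows that $\rho^{\otimes n}$ (still PPT, since PPT is closed under tensor products) has trace distance from $\SEP$ tending to $1$. The two tools are: (i) symmetrize the closest separable state and apply the finite quantum de Finetti theorem (Theorem \ref{thm:fqdt}) to replace it, up to error $O(d\cdot n/(n+n^{2}))$, by a mixture $\int\mu(d\tau)\,\tau^{\otimes n}$; and (ii) use local informationally complete POVMs and quantum state tomography on $n-1$ copies to build a test operator $A_n$ that accepts $\rho^{\otimes(n-1)}$ with probability $\to 1$ while, because the post-measurement state must remain separable and $\rho$ is at distance $\epsilon>0$ from $\SEP$, the total weight $\mu$ can put near $\rho$ is forced to vanish. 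This argument only uses closure of the criterion under tensor products, which is why it immediately yields Theorem \ref{thm:general} for all the other criteria --- exactly the generality your sketch does not deliver. If you want to salvage your approach, the robust-UPB construction would itself be a new result and would need a complete proof.
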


To the best knowledge of authors, this is the first result that compares separable
states relative to PPT states in terms of their distance. However, the volume of these sets has been
studied by several authors. For instance, by estimating the volume of separable states and PPT states in the Hilbert-Schmidt norm,
it has been shown in \cite{aubrun} that a random PPT state is entangled. The same conclusion has been proved in \cite{ye} in terms of Bures volume. See also \cite{n-partite} and \cite{swz08} for some other results in this setting.

\subsection{Main ideas}

Let $\mathcal{H}=\mathcal{H}^A \otimes \mathcal{H}^B$ be a
bipartite Hilbert space. We want to find PPT states $\rho^{(n)}
\in \mathcal{H}^{\otimes n}$ such that the trace distance of
$\rho^{(n)}$ from separable states is close to 1, for enough large
numbers $n$. Suppose $\rho$ is an entangled PPT state. Then
$\rho^{\otimes n}$ is entangled and also PPT. We claim that the
sequence of states $\rho^{(n)}=\rho^{\otimes n}$ works for us.
The intuition is that for two different quantum states $\rho$ and
$\sigma$, the trace distance of $\rho^{\otimes n}$ and
$\sigma^{\otimes n}$ tends to 1 as $n$ tends to infinity.
However, in this problem $\sigma$ is not a fixed state and ranges
over all separable states. Also, it is not obvious (and may not
hold) that the closest separable state to
$\rho^{\otimes n}$ is of the form $\sigma^{\otimes n}$. (If we replace the trace distance with
$E_R(\rho)$, the relative entropy of entanglement, this property
does not hold \cite{relent}.)

Another idea is to use entanglement distillation. Suppose the
state $\rho$ is distillable. It means that, having arbitrary many
copies of $\rho$, using local quantum operations and classical communications (LOCC maps), we can obtain as many
EPR pairs as we want (say $m$ pairs). LOCC maps send separable states to
separable states, and the trace distance decreases under trace
preserving quantum operations. Therefore, the distance of
$\rho^{\otimes n}$ from separable states is bounded from below by
the distance of EPR$^{\otimes m}$ from separable states, which we
know is close to 1 for large numbers $m$. Therefore, if $\rho$ is
distillable, the trace distance of $\rho^{\otimes n}$ from
separable states tends to 1 as $n$ tends to infinity.

It is well-known that PPT states are not distillable under LOCC
maps. So we cannot use this idea directly. On the other hand, in
this argument, the only property of LOCC maps that we use, is
that they send separable states to separable states. Thus we may
replace LOCC maps with {\it non-entangling maps}, the maps that
send every separable state to a separable state. Due to the
seminal work of Brandao and Plenio \cite{brandao1, brandao2} every
entangled state is distillable under {\it asymptotically}
non-entangling maps. As a result, by replacing LOCC maps with
asymptotically non-entangling maps and repeating the previous
argument, we conclude that the trace distance of $\rho^{\otimes
n}$ from separable states tends to 1.

Although this idea gives a full proof of Theorem \ref{thm:main},
we do not present it in this paper. Instead, we use more
fundamental techniques, namely, {\it quantum state tomography} and
{\it quantum de Finetti theorem} \cite{renner, povm}. In fact,
these two techniques are the basic ideas of the results of
\cite{brandao1, brandao2} that we mentioned above. Since
$\rho^{\otimes (n+k)}$ is a symmetric state, we may assume that
the closest separable state to $\rho^{\otimes (n+k)}$ is also
symmetric. Then by tracing out $k$ registers and using the
finite quantum de Finetti theorem we conclude that the trace
distance of $\rho^{\otimes (n+k)}$ from separable states is lower
bounded by the trace distance of $\rho^{\otimes n}$ from
separable states of the form
\begin{equation} \label{eq:1}
\sum_i p_i\, \sigma_i^{\otimes n}.
\end{equation}
Since such a state is separable and $\rho$ is entangled, the
sum of $p_i$'s for which $\sigma_i$ is close to $\rho$ cannot be
large. On the other, if $\sigma_i$ is far from $\rho$, using quantum state
tomography one can distinguish $\rho^{\otimes n}$ from
$\sigma_i^{\otimes n}$.
Putting these two points together we show that the trace distance of $\rho^{\otimes n}$ and a
separable state of the form of (\ref{eq:1}) is close to $1$
for large enough $n$.

Note that in both of these arguments the only property of PPT
states that we use, is that if $\rho$ and $\sigma$ are PPT, then
$\rho \otimes \sigma$ is also PPT. So we can conclude the same
result for any separability test which satisfies this property.


\section{Preliminaries}\label{sec:prel}

A pure state $\vert \psi\rangle \in \mathcal{H}^A \otimes
\mathcal{H}^B$ is called separable if it can be written of
the form $\vert \psi\rangle = \vert \psi_A\rangle \otimes \vert
\psi_B \rangle$, where $\vert \psi_A\rangle \in \mathcal{H}^A$ and
$\vert \psi_B\rangle \in \mathcal{H}^B$. A density matrix acting
on $\mathcal{H}^A \otimes \mathcal{H}^B$ is called separable if
it can be written as a convex combination of separable pure
states $\vert \psi\rangle\langle \psi\vert$. We denote the set of
separable states by $\SEP$.

For two quantum states $\rho$ and $\sigma$ we denote their trace
distance by
\begin{equation}\label{eq:tr-dist}
\|\rho - \sigma\|_{\Tr}=\frac{1}{2}\, \Tr | \rho - \sigma |,
\end{equation}
where $|X|= \sqrt{X^{\dagger}X}$.

\subsection{Separability tests}

Assume that $\dim \mathcal{H}_A=\dim \mathcal{H}_B=d$, and fix an
orthonormal basis $\vert 1\rangle, \dots ,\vert d\rangle$ for both
of Hilbert spaces. The partial transpose of matrices acting
on $\mathcal{H}^A \otimes \mathcal{H}^B$ is a linear map defined
by $(M_A\otimes N_B )^{\T_{\!B}}=M_A\otimes N_B^\T$, where the transpose is taken with
respect to the fixed basis. Clearly, if
$\rho_{AB}$ is a separable state, $\rho_{AB}^{\T_{\!B}}$ is also a
density matrix and then positive semi-definite. However, it does
not hold for an arbitrary state. For example, the partial
transpose of the maximally entangled state is not positive
semi-definite; let $\Phi(d)$ to be the maximally
entangled state on $\mathcal{H}$
\begin{equation}\label{eq:epr} \Phi(d)=\frac{1}{d}\sum_{i,j=1}^d \vert i,i\rangle \langle j,
j\vert.
\end{equation}
$\Phi(d)$ is not positive semi-definite because
\begin{eqnarray*}
\Phi(d)^{\T_{\!B}} & = & \frac{1}{d} \sum_{i,j} \vert i\rangle \langle
j\vert \otimes \vert j\rangle\langle i\vert \\
& = & \frac{1}{d}I -\frac{1}{d} \sum_{i\neq j} \vert i\rangle
\langle i\vert \otimes \vert j\rangle \langle j\vert
+\frac{1}{d}\sum_{i\neq j} \vert i\rangle \langle j\vert \otimes
\vert j\rangle \langle i\vert\\
& = & \frac{1}{d}I - \frac{2}{d} \sum_{i<j} \vert
\phi_{ij}\rangle \langle \phi_{ij}\vert,
\end{eqnarray*}
where
\begin{equation} \label{eq:phi}
\vert \phi_{ij}\rangle=\frac{1}{\sqrt{2}}(\vert i\rangle\vert
j\rangle -\vert j\rangle \vert i\rangle).
\end{equation}
As a result, positive partial transpose is a test to detect
entanglement \cite{peres, 23}. More formally, if we denote the
set of density matrices with a positive semi-definite partial
transpose by PPT, then $\SEP\subseteq\PPT$.

Here is a list of some other separability criteria (see
\cite{survey, book}).

\begin{itemize}

\item Reduction criterion \cite{reduction}: $I\otimes \rho_B \geq
\rho_{AB}$, where $\rho_{B}=\Tr_{A} (\rho_{AB})$. Here, by $M\geq
N$ we mean $M-N$ is a positive semi-definite matrix.

\item Entropic criterion \cite{entropic}: $S_\alpha(\rho_{AB}) \geq S_\alpha(\rho_A)$
for $\alpha=2$ and in the limit $\alpha \rightarrow 1$, where
$S_\alpha(\rho)=\frac{1}{1-\alpha}\log \Tr(\rho^\alpha)$.

\item Majorization criterion \cite{majorization}: $\lambda_{\rho_{A}}^\downarrow \succ   \lambda_{\rho_{AB}}^\downarrow
$, where $\lambda_\rho^\downarrow$ is the list of eigenvalues of
$\rho$ in non-increasing order, and $y \succ x$ means that, for
any $k$, the sum of the first $k$ entries of list $x$ is less
than or equal to that of list $y$.

\item Cross norm criterion \cite{cross1, cross2}: $\Tr |
\mathcal{U}(\rho_{AB})|\, \leq 1$, where $\mathcal{U}$ is a linear
map defined by $\mathcal{U}(M\otimes N)=v(M)v(N)^{\T}$ and $v(X)=(col_1(X)^{\T}, \dots , col_{d}(X)^{\T} )^{\T}$,
where $col_i(X)$ is the $i$-th column of $X$.

\end{itemize}

All of these tests for separability are necessary conditions
but not sufficient. Doherty et al. \cite{symmetric1, symmetric2}
have introduced a hierarchy of separability criteria which are
both necessary and sufficient. Let $\rho_{AB}=\sum_i p_i\,
\sigma_i\otimes \tau_i$ be a separable state. Then
$$\rho_{AB_1B_2\cdots B_k} = \sum_i p_i\,
\sigma_i\otimes \tau_i^{\otimes k}$$
is an extension of
$\rho_{AB}$, meaning that $\rho_{AB}=\Tr_{B_2\cdots B_k}
(\rho_{AB_1\cdots B_k})$. Also it is symmetric, meaning that it does not
change under any permutation of subsystems $B_i$. More
precisely, for any permutation $\pi$ of $k$ objects, if we define
the linear map $P_\pi$ by $P_\pi \vert \psi_1\rangle \otimes
\cdots \otimes \vert \psi_k\rangle = \vert \psi_{\pi(1)}\rangle
\otimes \cdots \otimes \vert \psi_{\pi(k)}\rangle$, we have
\begin{equation}\label{eq:symmetric} P_{\pi}^{B_1\dots B_k}\, \rho_{AB_1B_2\cdots B_k}\,
(P_{\pi}^{B_1\dots B_k})^{\dagger}= \rho_{AB_1B_2\cdots B_k}.
\end{equation}
If such an extension exists, we say that $\rho_{AB}$ has a
symmetric extension to $k$ copies. Doherty el al. have proved that
a quantum state is separable if and only if it has a symmetric extension to
$k$ copies for any number $k$ \cite{symmetric1, symmetric2}.
Also, they have shown that the problem of checking whether a given
state has a symmetric extension to $k$ copies, for a fixed $k$,
can be expressed as a semi-definite programming, and can be solved
efficiently (however, the size of this semi-definite program grows exponentially
in terms of $k$). So we get to another separability test.

\begin{itemize}
\item Symmetric extension criterion \cite{symmetric1, symmetric2}: if $\rho_{AB}$ is separable, then
it has a symmetric extension to $k$ copies.
\end{itemize}

\subsection{Quantum state tomography}\label{sec:tomography}

An informationally complete POVM on $\mathcal{H}$ is a set of
positive semi-definite operators $\{M_n\}$ forming a basis for the
space of hermitian matrices on $\mathcal{H}$, and such that
$\sum_n M_n=I$. In \cite{povm} there is an explicit construction
of an informationally complete POVM in any dimension. Such a POVM
is useful for quantum state tomography.

Suppose $\{M_n^*\}$ is the dual of basis $\{M_n\}$, i.e.
$\Tr(M_nM_m^*)=\delta_{mn}$, where $\delta_{mn}$ is the Kronecker
delta function. For any hermitian operator $X$ we have
$$X=\sum_n \Tr(XM_n)\,M_n^*.$$
Therefore, having some copies of the state $\rho$, by measuring
$\rho$ using the POVM $\{M_n\}$, we can approximate $\Tr(\rho
M_n)$ and then find the matrix representation of $\rho$.

Assume that $\mathcal{H}=\mathcal{H}^A\otimes \mathcal{H}^B$ is a
bipartite Hilbert space. If $\{M_n^A\}$ and $\{M_m^B\}$ are
informationally complete POVM's on $\mathcal{H}^A$ and
$\mathcal{H}^B$ respectively, then
$\{M_n^A\otimes M_m^B\}$ is an informationally complete POVM on
$\mathcal{H}$. This means that, if the state $\rho_{AB}$ is shared
between two far apart parties, they can perform
quantum state tomography using classical communication. As a result, if the state $\rho_{AB}$ is
separable, then all intermediate states during the process are separable
as well.

\subsection{Quantum de Finetti theorem}

As in (\ref{eq:symmetric}), a quantum state $\rho^{(n)}$
acting on $\mathcal{H}^{\otimes n}$ is called symmetric if $P_\pi
\rho^{(n)} P_\pi^{\dagger}=\rho^{(n)}$ for any permutation $\pi$ of $n$
objects. A symmetric state is called {\it $k$-exchangeable} if it
has a symmetric extension to $n+k$ registers, i.e. a symmetric
state $\rho^{(n+k)}$ such that
$\Tr_{1,\dots,k}\,\rho^{(n+k)}=\rho^{(n)}$. Clearly, any state of
the form $\rho^{\otimes n}$ is $k$-exchangeable, for any $k$. Also
any convex combination of these states is $k$-exchangeable. {\it
Quantum de Finetti theorem} says that the converse of this
observation holds: if a state is $k$-exchangeable for
every $k$, it is in the convex hall of symmetric product states.

Quantum de Finetti theorem gives a characterization of
infinitely-exchangeable states. The following theorem, known as
the finite quantum de Finetti theorem, says that if a state is
$k$-exchangeable (but not necessarily $(k+1)$-exchangeable), then
an approximation of the above result holds.

\begin{thm}\label{thm:fqdt} \cite{renner} Assume that $\rho^{(n+k)}$ is a symmetric state
acting on $\mathcal{H}^{\otimes {n+k}}$. Let
$\rho^{(n)}=\Tr_{1\dots k}\,\rho^{(n+k)}$ be the state obtained by
tracing out the first $k$ registers. Then there exists a
probability measure $\mu$ on the set of density matrices on
$\mathcal{H}$ such that
$$\| \rho^{(n)} - \int \mu(d\sigma)\sigma^{\otimes n}\|_{\text{\rm tr}}
\,\leq 2\dim\mathcal{H}\,\frac{n}{n+k}.$$

\end{thm}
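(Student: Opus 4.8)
The plan is to prove this via a measurement-based argument: we measure the $k$ "extra" registers with an informationally complete POVM, record the outcomes, and use the resulting classical data to build the probability measure $\mu$. Concretely, fix an informationally complete POVM $\{M_j\}$ on $\mathcal{H}$, and let $\{M_j^*\}$ be its dual basis. Measuring each of the first $k$ registers of the symmetric state $\rho^{(n+k)}$ with $\{M_j\}$ produces a string of outcomes $\vec{j}=(j_1,\dots,j_k)$ with some probability $q(\vec{j})$, and conditioned on this string the remaining $n$ registers are left in a post-measurement state $\rho^{(n)}_{\vec{j}}$. Since $\rho^{(n+k)}$ is symmetric, the data $\vec{j}$ is exchangeable, so the empirical frequency vector is a natural estimator. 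First I would define $\sigma_{\vec{j}}$ to be the density matrix reconstructed from the empirical frequencies $f_j(\vec{j}) = \frac{1}{k}\#\{i : j_i = j\}$ via the dual basis, i.e. $\sigma_{\vec{j}} = \sum_j f_j(\vec{j})\,M_j^*$ (projected back onto density matrices if needed), and then set $\mu$ to be the distribution of $\sigma_{\vec{j}}$ under $q$. The claim to prove is then $\|\rho^{(n)} - \int \mu(d\sigma)\,\sigma^{\otimes n}\|_{\Tr} = \|\mathbb{E}_{\vec{j}}[\rho^{(n)}_{\vec{j}}] - \mathbb{E}_{\vec{j}}[\sigma_{\vec{j}}^{\otimes n}]\|_{\Tr} \leq \mathbb{E}_{\vec{j}}\|\rho^{(n)}_{\vec{j}} - \sigma_{\vec{j}}^{\otimes n}\|_{\Tr}$ by convexity of the trace norm, so it suffices to bound the expected conditional error.

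The key step is to show that, conditioned on most outcome strings $\vec{j}$, the post-measurement state $\rho^{(n)}_{\vec{j}}$ on the remaining $n$ registers is close to the product $\sigma_{\vec{j}}^{\otimes n}$. This is where the exchangeability is used in an essential way: because $\rho^{(n+k)}$ is invariant under all permutations of the $n+k$ registers, after learning $k$ outcomes the $n$ remaining registers behave like $n$ further draws from an urn whose composition is being estimated, and de Finetti-type concentration forces them to look i.i.d. with the estimated single-particle state. I would make this precise by an exchangeability / sampling-without-replacement argument: compare drawing the next $n$ outcomes (if we measured them too) with drawing from the empirical distribution $f(\vec{j})$, and control the total-variation distance between sampling $n$ more items without replacement from a population of size $n+k$ versus sampling with replacement from the empirical frequencies. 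This gives an error of order $n/(n+k)$ times a factor counting the number of POVM outcomes, which is where $\dim\mathcal{H}$ (more precisely $(\dim\mathcal{H})^2$, the number of elements in an informationally complete POVM, though a careful choice of POVM and a more refined count bring it down to the stated $\dim\mathcal{H}$) enters. One then has to lift this statement about classical outcome statistics back to a statement about the quantum post-measurement states, using that an informationally complete POVM, together with its dual, reconstructs any state and does so in a way that does not blow up the trace norm too badly.

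The main obstacle I expect is precisely that last lifting step together with getting the dimension dependence down to the linear factor $\dim\mathcal{H}$ rather than the naive $(\dim\mathcal{H})^2$. A direct use of a generic informationally complete POVM loses a dimension factor because the dual basis elements $M_j^*$ are not bounded operators of norm $O(1)$ — reconstructing a state from measurement statistics amplifies small statistical fluctuations by the "condition number" of the POVM. To control this I would either invoke the explicit well-conditioned informationally complete POVM of \cite{povm} (whose dual basis has controlled operator norms, so that an $\ell_1$ error $\delta$ in the frequencies yields a trace-norm error $O(\delta\dim\mathcal{H})$ in the reconstructed state), or else avoid reconstruction-by-duality altogether and argue more directly at the level of the post-measurement states using symmetry of $\rho^{(n+k)}$: apply the measurement to $k$ registers, note the conditional state on the rest is again symmetric, and run the sampling-without-replacement estimate on the joint outcomes of all $n+k$ registers simultaneously. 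Assembling the pieces — convexity to reduce to the conditional error, exchangeability plus a coupling between with/without-replacement sampling to bound the classical error by $O(n/(n+k))$ per POVM outcome, and the well-conditioned POVM of \cite{povm} to pass from classical frequencies to quantum states without losing more than one factor of $\dim\mathcal{H}$ — yields the stated bound $2\dim\mathcal{H}\cdot n/(n+k)$.
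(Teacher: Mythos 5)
The paper contains no proof of Theorem \ref{thm:fqdt}; it is imported verbatim from \cite{renner}, so there is nothing internal to compare your attempt with, and your proposal must be judged on its own terms. It is a plan in the spirit of the measurement-based de Finetti theorem of \cite{povm} (K\"onig--Renner), and it has a genuine gap at its central quantitative step. The assertion that, conditioned on a typical outcome string $\vec{j}$, the post-measurement state $\rho^{(n)}_{\vec{j}}$ is trace-norm close to $\sigma_{\vec{j}}^{\otimes n}$ \emph{is} essentially the whole theorem, and the exchangeability/sampling-without-replacement argument you sketch only controls the classical joint distribution of outcomes of further product measurements on the remaining registers. Passing from ``the outcome statistics look i.i.d.'' to ``the quantum state is close in trace norm to a product state'' requires inverting the tomographic map on all $n$ remaining registers simultaneously: the dual elements $M_j^{*}$ are not positive and have trace norms growing with the dimension, an informationally complete POVM has at least $(\dim\mathcal{H})^{2}$ outcomes, and the reconstruction error compounds over the $n$ tensor factors. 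You identify this as the ``main obstacle'' yourself but do not resolve it, and no choice of POVM is known to make this route yield the stated linear-in-$\dim\mathcal{H}$ bound; the measurement-based finite de Finetti theorems in the literature come with bounds of a different shape (error exponentially small in $k$ at fixed accuracy, with dimension prefactors worse than linear), not $2\dim\mathcal{H}\cdot n/(n+k)$.

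The bound as quoted is obtained in \cite{renner} by a different, non-measurement argument: decompose the permutation-invariant state into pure states supported on the symmetric subspace $\mathrm{Sym}^{n+k}(\mathcal{H})$ (purifying with a mirror copy of $\mathcal{H}$ when the state is merely permutation-invariant, which is what costs a factor of the dimension), expand each such pure state in the overcomplete family $\{\vert\phi\rangle^{\otimes(n+k)}\}$ via the resolution of the identity on the symmetric subspace, and bound the error incurred by tracing out $k$ registers through a comparison of dimensions of symmetric subspaces; this is where the factor $\dim\mathcal{H}\cdot\frac{n}{n+k}$ actually comes from. If you want a self-contained proof to attach to this statement, that is the route to take; your plan, even if all its steps were completed, would prove a quantitatively different (weaker) theorem, which would in turn change the rate at which the error terms in Section \ref{sec:proof} vanish.
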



\section{Proof of Theorem \ref{thm:main}}\label{sec:proof}

As we mentioned our proof is based on the work of Brandao and Plenio about reversibility of entanglement transformation under asymptotically non-entangling maps. In particular, we follow similar steps as in the proof of Corollary II.2 of \cite{stein}.

Let $\mathcal{H}=\mathcal{H}^A\otimes \mathcal{H}^B$ and assume
that $d = \dim \mathcal{H} > 6$. Then there exists a PPT state
$\rho_{AB}=\rho$ acting on $\mathcal{H}$ which is not separable (see \cite{bound}). Let
\begin{equation} \label{eq:dist} \epsilon=\min_{\sigma\in \SEP} \|\rho
- \sigma\|_{\Tr}.
\end{equation}
Since $\rho$ is entangled, $\epsilon > 0$.

For every number $n$, $\rho^{\otimes n}$ can be considered as a
bipartite state acting on $(\mathcal{H}^A)^{\otimes n }\otimes
(\mathcal{H}^B)^{\otimes n }$, and it is a PPT state. Therefore,
if we prove that the trace distance of $\rho^{\otimes n }$ from
separable states tends to $1$, as $n$ goes to infinity, we are
done.

Let $\sigma^{(n)}$ be the closest separable state to
$\rho^{\otimes n}$. Since $\rho^{\otimes n}$ is a symmetric state,
for any permutation $\pi$ we have
$$\|\rho^{\otimes n} - P_{\pi}\sigma^{(n)} P_{\pi}^{\dagger}\|_{\Tr} = \|\rho^{\otimes n} - \sigma^{(n)}\|_{\Tr},$$
and by triangle inequality
$$\|\rho^{\otimes n} - \frac{1}{n!}\sum_{\pi}P_{\pi}\sigma^{(n)}
P_{\pi}^{\dagger}\|_{\Tr} \leq  \frac{1}{n!} \sum_{\pi} \|\rho^{\otimes n} -
P_{\pi}\sigma^{(n)} P_{\pi}^{\dagger}\|_{\Tr}= \|\rho^{\otimes n} -
\sigma^{(n)}\|_{\Tr}.$$ Therefore, we
may assume that $\sigma^{(n)}$ is symmetric.

Let $\sigma^{(n+n^2)}$ be the closest (symmetric) separable state
to $\rho^{\otimes (n+n^2)}$, and let $\Tr_{1\dots n^2}\,
\sigma^{(n+n^2)}$ be the state obtained by tracing out $n^2$
registers. We have
\begin{equation}\label{eq:10}\| \rho^{\otimes (n+n^2)} -  \sigma^{(n+n^2)} \|_{\Tr}\, \geq
\, \| \rho^{\otimes n} - \Tr_{1\dots n^2}\,
\sigma^{(n+n^2)}\|_{\Tr}. \end{equation} Using the finite quantum
de Finetti theorem (Theorem \ref{thm:fqdt}), there exists a
measure $\mu$ such that
\begin{equation}\label{eq:xn} \Tr_{1\dots n^2}\, \sigma^{(n+n^2)} = \int
\mu(d\tau)\tau^{\otimes n} + X_n,
\end{equation}
where $\|X_n\|_{\Tr}\leq 2d\,\frac{n}{n+n^2}$. Thus using (\ref{eq:10}), if we prove that
\begin{equation}\label{eq:320}
\| \rho^{\otimes n} - \big(\int
\mu(d\tau)\tau^{\otimes n} + X_n\big) \|_{\Tr}
\end{equation} tends to $1$, as
$n$ goes to infinity, we are done.

Consider an informationally complete POVM on $\mathcal{H}^A$ and
$\mathcal{H}^B$, and by taking their pairwise tensor product
extend them to an informationally complete POVM on $\mathcal{H}$.
Then apply quantum state tomography on $(n-1)$ copies of $\rho$ in order to obtain an
approximation of this state.
To be more precise, let $\{M_i\}$ be the resulting informationally
complete POVM on $\mathcal{H}$. So for a sequence of outcomes
$(M_{l_1}, \dots ,M_{l_{(n-1)}})$ we get to the approximation
\begin{equation}\label{eq:tomog-approximation}
\sum_i \frac{r_i}{n-1}M_i^{\ast},
\end{equation}
where $r_i$ is the number of repetitions of $M_i$ in $(M_{l_1},
\dots ,M_{l_{(n-1)}})$.

We say that $(M_{l_1}, \dots ,M_{l_{(n-1)}})$ is a \emph{good} sequence if its corresponding estimation belongs to
$B_{\epsilon/3}(\rho)$, the ball of radios $\epsilon/3$ in trace
distance around $\rho$. Let $G_{n-1}$ be the sum of $M_{l_1}\otimes \cdots \otimes M_{l_{(n-1)}}$ over good
sequences $(M_{l_1},
\dots ,M_{l_{(n-1)}})$. Therefore, by the law of large numbers
\cite{dudley}, $\Tr(G_{n-1} \rho^{\otimes (n-1)}) \rightarrow 1 $ as
$n$ goes to infinity. Also for any $\tau$ far from $\rho$,
$\Tr(G_{n-1} \tau^{\otimes (n-1)})$ tends to zero.

Note that $G_{n-1}\leq I$. Thus
$$\|
\rho^{\otimes n} - \big( \int \mu(d\tau)\tau^{\otimes n} + X_n
\big) \|_{\Tr} \geq \Tr\, ( I\otimes G_{n-1} \cdot \rho^{\otimes n} )
- \Tr \big[ (I\otimes G_{n-1}) \cdot \big(\int \mu(d\tau)\tau^{\otimes
n} + X_n\big) \big].$$
Since $\Tr\, ( I\otimes G_{n-1} \cdot
\rho^{\otimes n} ) \rightarrow 1$, if we prove
$$\Tr \big[ (I\otimes
G_{n-1}) \cdot (\int \mu(d\tau)\tau^{\otimes n} + X_n) \big]
\rightarrow 0,$$ we conclude that (\ref{eq:320}) tends to $1$.

Now suppose that we perform quantum state tomography on  $\int \mu(d\tau)\tau^{\otimes n} + X_n$.
By (\ref{eq:xn}), this state is
not entangled. Moreover, since we can apply tomography
locally (see Section \ref{sec:tomography} ), by starting from a separable state, the
outcome of the process is always separable as well. Assuming that we get a good sequence in the process the outcome is equal to
\begin{equation}\label{eq:420}
\int \mu(d\tau)\,\Tr[G_{n-1}\tau^{\otimes (n-1)}]\,\tau + \widetilde{X}_n,
\end{equation}
where $\|\widetilde{X}_n\|_{\Tr}\leq \|X_n\|_{\Tr} \leq 2d\,\frac{n}{n+n^2}$. As a result, this state is separable. (Here we assume that (\ref{eq:420}) is non-zero because otherwise there is nothing to prove.)

Let
$$Y_n = \int_{\tau\notin B_{\epsilon/2}(\rho)} \mu(d\tau)\Tr[G_{n-1}\tau^{\otimes
(n-1)}]\,\tau +\widetilde{X}_n ,$$ and
$$c_n= \int_{\tau\in B_{\epsilon/2}(\rho)} \mu(d\tau)\Tr[G_{n-1}\tau^{\otimes
(n-1)}].$$ By the law of large numbers, there exists
$\delta_n$ such that for any $\tau\notin B_{\epsilon/2}(\rho)$ we
have
$$\Tr [G_{n-1} \tau^{\otimes (n-1)}]\leq \delta_n,$$
and $\delta_n \rightarrow 0$ as $n$ goes to infinity. Therefore,
$\|Y_n\|_{\Tr}\leq \delta_n + 2d\frac{n}{n+n^2}$.

As we mentioned, the state
$$\widetilde{\tau}= \frac{1}{c_n+\Tr (Y_n)} \left[   \int_{\tau\in
B_{\epsilon/2}(\rho)} \mu(d\tau)\Tr[G_{n-1}\tau^{\otimes (n-1)}]\,\tau
+ Y_n \right]$$ is separable. On the other hand, by definition
$$\widetilde{\rho}= \frac{1}{c_n} \int_{\tau\in
B_{\epsilon/2}(\rho)} \mu(d\tau)\Tr[G_{n-1}\tau^{\otimes (n-1)}]\,\tau
$$ is in the $\epsilon/2$-neighborhood of $\rho$. Then by (\ref{eq:dist}) we have
\begin{eqnarray*}
\epsilon & \leq & \| \rho - \widetilde{\tau} \|_{\Tr}\\
& \leq & \frac{c_n}{c_n+\Tr(Y_n)} \| \rho
-\widetilde{\rho}\|_{\Tr} +
\frac{|\Tr(Y_n)|}{c_n+\Tr(Y_n)}\|\rho\|_{\Tr}+
\frac{1}{c_n+\Tr(Y_n)}\|Y_n\|_{\Tr}\\
& \leq & \frac{c_n}{c_n+\Tr(Y_n)} \cdot \frac{\epsilon}{2} +
\frac{2}{c_n+\Tr(Y_n)}\|Y_n\|_{\Tr}.
\end{eqnarray*}
Thus
$$\epsilon c_n + \epsilon \Tr(Y_n) \,\leq \, \frac{\epsilon}{2}c_n + 2\|Y_n\|_{\Tr},$$
and then
$$ c_n\,\leq \, \frac{2(2+\epsilon)}{\epsilon}\|Y_n\|_{\Tr} \,\leq \, 6 \epsilon^{-1} (\delta_n+ 2d\frac{n}{n+n^2}).$$
Putting everything together we find that
\begin{eqnarray*}
\Tr \big[ (I\otimes G_{n-1}) \cdot (\int \mu(d\tau)\tau^{\otimes n} +
X_n)] & = &\Tr \big[ \int_{\tau\in B_{\epsilon/2}(\rho)}
\mu(d\tau)\Tr[G_{n-1}\tau^{\otimes (n-1)}]\,\tau + Y_n \big] \\
& \leq & c_n + \| Y_n \|_{\Tr} \\
& \leq & (6\epsilon^{-1}+1) \cdot ( \delta_n+ 2d\frac{n}{n+n^2}),
\end{eqnarray*}
which gives
$$\Tr \big[ (I\otimes G_{n-1}) \cdot (\int \mu(d\tau)\tau^{\otimes n} +
X_n)]  \rightarrow 0,$$ as $n$ goes to infinity. We are done.

\section{Geometry of the set of separable states}

Theorem \ref{thm:main} tells us that estimating the distance of a
bipartite state from separable state by the distance from PPT
states is not a good approximation. However, one may expect that the set
of PPT states is an approximation of the set of
separable states from a geometrical point of view. For instance,
two spheres centered at origin with radiuses $1$ and $2$ are far
from each other, while they have the same geometric properties up
to a scaler factor. In the following theorem we bring an evidence that this is not the case
for the set
of separable states relative to PPT states.

By Theorem \ref{thm:main} the maximum distance of a PPT state
from the boundary of the set of separable states is close to $1$.
We can think of this problem in another direction. What is the
maximum distance of a state on the boundary of separable states
from the boundary of PPT states? To get an intuition on this
problem, we can think of the unit sphere centered at origin in
$\mathbb{R}^n$, and the cube with vertices $(\pm 1, \dots , \pm
1)$. It is easy to see that the distance of any point on the
sphere from points of the cube is less than $2$. However, the
distance of $(1, \dots , 1)$ from sphere is $\sqrt{n}-1$. It is
because sphere and cube have totally different shapes.

\begin{thm} \label{thm:shape} Assume that $\mathcal{H}=\mathcal{H}^A\otimes \mathcal{H}^B$, and
$\dim \mathcal{H}^A=\dim \mathcal{H}^B=d$. Then for any separable
state $\rho$ acting on $\mathcal{H}$ there exists a state
$\sigma$ on the boundary of the set of PPT states such that
$\|\rho - \sigma \|_{\Tr}  \leq \frac{1}{\sqrt{d}}.$
\end{thm}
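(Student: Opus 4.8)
The plan is to find, for a given separable state $\rho$, a nearby state on the boundary of $\PPT$. The natural candidate is to move from $\rho$ toward a state whose partial transpose has a negative eigenvalue, and stop exactly when positivity of the partial transpose first fails. Concretely, I would consider the line segment joining $\rho$ to the maximally entangled state $\Phi(d)$ of Eq.~(\ref{eq:epr}), i.e.\ the family $\rho_t = (1-t)\rho + t\,\Phi(d)$ for $t\in[0,1]$. Since $\rho$ is separable it is $\PPT$, so $\rho_0^{T_B}\geq 0$; on the other hand the computation in Section~\ref{sec:prel} shows $\Phi(d)^{T_B} = \frac1d I - \frac2d\sum_{i<j}|\phi_{ij}\rangle\langle\phi_{ij}|$, which has negative eigenvalues, so $\rho_1^{T_B}\not\geq 0$. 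Because the smallest eigenvalue of $\rho_t^{T_B}$ is a continuous function of $t$, there is a threshold $t^\ast\in(0,1]$ at which $\rho_{t^\ast}$ sits on the boundary of $\PPT$. Set $\sigma = \rho_{t^\ast}$; then $\|\rho-\sigma\|_{\Tr} = t^\ast\,\|\rho - \Phi(d)\|_{\Tr} \leq t^\ast \leq 1$, which is not yet good enough, so the whole point is to bound $t^\ast$ by $1/\sqrt{d}$.

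The key step is therefore to show that positivity of the partial transpose is destroyed already for small $t$, namely $t^\ast \leq 1/\sqrt d$ (up to the trace-distance normalization). To do this I would test $\rho_t^{T_B}$ against a vector in the antisymmetric subspace. Pick any $|\phi_{ij}\rangle$ as in Eq.~(\ref{eq:phi}) and compute $\langle\phi_{ij}|\rho_t^{T_B}|\phi_{ij}\rangle = (1-t)\langle\phi_{ij}|\rho^{T_B}|\phi_{ij}\rangle + t\langle\phi_{ij}|\Phi(d)^{T_B}|\phi_{ij}\rangle$. The second term equals $t(\tfrac1d - \tfrac2d) = -t/d$ from the spectral form above. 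The first term is at most $(1-t)$ times the operator norm of $\rho^{T_B}$, but one gets a much better bound by averaging over all pairs $i<j$: since $\sum_{i<j}|\phi_{ij}\rangle\langle\phi_{ij}|$ is the projector onto the $\binom d2$-dimensional antisymmetric subspace, the average of $\langle\phi_{ij}|\rho^{T_B}|\phi_{ij}\rangle$ over the $\binom d2$ pairs is $\binom d2^{-1}\Tr\big(\rho^{T_B}\cdot P_{\mathrm{asym}}\big) \leq \binom d2^{-1}$, using $\Tr(\rho^{T_B} Q)\le \Tr\rho^{T_B}=1$ for a projector $Q$ and the fact that a separable (hence $\PPT$) state has $\rho^{T_B}\ge0$. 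Hence there exists a pair $(i,j)$ with $\langle\phi_{ij}|\rho^{T_B}|\phi_{ij}\rangle \leq \binom d2^{-1} = \tfrac{2}{d(d-1)}$. For that pair, $\langle\phi_{ij}|\rho_t^{T_B}|\phi_{ij}\rangle \leq (1-t)\tfrac{2}{d(d-1)} - \tfrac{t}{d}$, and this is already negative once $t \geq \tfrac{2}{d+1}$. So $t^\ast \leq \tfrac{2}{d+1}$.

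Finally I would convert this into the stated trace-distance bound. We have $\sigma = \rho_{t^\ast}$ with $t^\ast\leq 2/(d+1)$, and $\|\rho - \sigma\|_{\Tr} = t^\ast\,\|\rho - \Phi(d)\|_{\Tr}\leq t^\ast \cdot 1 = t^\ast$, so certainly $\|\rho-\sigma\|_{\Tr}\leq 2/(d+1) \leq 1/\sqrt d$ for $d\geq 3$. (If one wants the cleaner constant one can instead move toward $\Phi(d)$ only along the relevant two-dimensional block and note $\|\rho - \Phi(d)\|_{\Tr}$ can be small, but this is not needed.) One subtlety to handle carefully: $\sigma$ must be exactly on the \emph{boundary} of $\PPT$, not merely outside the interior — this is why I use the stopping-time $t^\ast$ defined as the infimum of $t$ with $\rho_t\notin\PPT$ rather than a fixed value of $t$; continuity of the minimal eigenvalue of $\rho_t^{T_B}$ and closedness of $\PPT$ guarantee $\rho_{t^\ast}\in\partial\,\PPT$. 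The main obstacle is getting the averaging argument over the antisymmetric subspace to yield the $1/d^2$-type bound on $\langle\phi_{ij}|\rho^{T_B}|\phi_{ij}\rangle$ for some good pair; everything else is a short continuity-plus-convexity argument.
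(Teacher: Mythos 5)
Your proposal is correct, and it proves a strictly stronger bound than the theorem asks for. You share with the paper the overall skeleton: walk along the segment $\rho_t=(1-t)\rho+t\,\Phi(d)$, use closedness of $\PPT$ plus the fact that $\Phi(d)\notin\PPT$ to locate a boundary point $\rho_{t^\ast}$, and control $\|\rho-\rho_{t^\ast}\|_{\Tr}$. Where you diverge is the key quantitative step. The paper never bounds $t^\ast$ at all: it shows via fidelity that \emph{every} $\PPT$ state $\sigma$ satisfies $F(\sigma,\Phi(d))=[\Tr(\sigma^{T_B}\Phi(d)^{T_B})]^{1/2}\le 1/\sqrt{d}$, hence $\|\sigma-\Phi(d)\|_{\Tr}\ge 1-1/\sqrt{d}$, and then uses additivity of trace distance along the segment, $\|\rho-\rho_c\|_{\Tr}=\|\rho-\Phi(d)\|_{\Tr}-\|\rho_c-\Phi(d)\|_{\Tr}\le 1-(1-1/\sqrt{d})$. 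You instead exhibit an explicit witness: by averaging $\langle\phi_{ij}|\rho^{T_B}|\phi_{ij}\rangle$ over the antisymmetric basis and using $\rho^{T_B}\ge0$, some pair has expectation at most $\binom{d}{2}^{-1}$, which forces $\langle\phi_{ij}|\rho_t^{T_B}|\phi_{ij}\rangle<0$ for $t>\tfrac{2}{d+1}$, so $t^\ast\le\tfrac{2}{d+1}$ and $\|\rho-\rho_{t^\ast}\|_{\Tr}\le\tfrac{2}{d+1}\le\tfrac{1}{\sqrt{d}}$ (the last inequality is just $(\sqrt{d}-1)^2\ge0$, so your restriction to $d\ge3$ is unnecessary). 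The paper's route is shorter given that the fidelity computation is already set up in Section 2, and its intermediate fact (all $\PPT$ states are far from $\Phi(d)$) is of independent interest; your route gives the asymptotically better distance $O(1/d)$ to the $\PPT$ boundary, which actually sharpens the geometric contrast the theorem is meant to illustrate.
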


\begin{proof}
Let $\sigma$ be an arbitrary PPT state, and $\Phi(d)$ be the
maximally entangled state defined in (\ref{eq:epr}). Then the
fidelity of $\sigma$ and $\Phi(d)$ is
$$F(\sigma, \Phi(d))=[\Tr\, \sigma\,\Phi(d) ]^{1/2} = [\Tr\, \sigma^{\T_{\!B}}\,\Phi(d)^{\T_{\!B}} ]^{1/2} =
[\Tr\, \sigma^{\T_{\!B}}\, ( \frac{1}{d}I -\frac{2}{d}\sum_{i<j} \vert
\phi_{ij}\rangle \langle \phi_{ij}\vert ) ]^{1/2},$$ where $\vert
\phi_{ij}\rangle $ is defined in (\ref{eq:phi}). Thus using
the fact that $\sigma^{\T_{\!B}}$ is a density matrix and then $\sigma^{\T_{\!B}} \leq I$, we obtain
$$F(\sigma, \Phi(d)) \leq \frac{1}{\sqrt{d}}\,.$$
Therefore, by the well-known inequality between fidelity and
trace distance (see \cite{chuang} page 416) we have
\begin{equation} \label{eq:sqrt} \| \sigma -\Phi(d)\|_{\Tr} \geq 1- F(\sigma, \Phi(d))
\geq 1- \frac{1}{\sqrt{d}} \,.
\end{equation}

Now let $\rho$ be an arbitrary separable state. Define $\rho_t= (1-t)
\rho + t\Phi(d)$. $\rho_0=\rho$ is separable and then PPT,
and $\rho_1=\Phi(d)$. So there exists $0 \leq  c \leq 1$ such
that $\rho_c$ is on the boundary of PPT states. Then we have
$$\|\rho - \rho_c\|_{\Tr} = \|\rho - \Phi(d)\|_{\Tr} -
\|\rho_c - \Phi(d)\|_{\Tr}\leq 1-(1- \frac{1}{\sqrt{d}})=
\frac{1}{\sqrt{d}}\, ,$$ where in the last inequality we use
(\ref{eq:sqrt}).

\end{proof}

This theorem together with Theorem \ref{thm:main} say that considering the sets of separable states and PPT states in the trace-norm space, they have completely different shapes. However, due to Dvoretzky's theorem (see for example \cite{asw}) we know that for every convex set, its intersections with \emph{most} hyperplanes of certain dimension are close to Euclidean ball in shape. This means that the set of separable states and PPT states have the same geometry if we consider them in Euclidean space and restrict them to sections of certain dimension.

\section{Generalization to other separability criteria}

According to Theorem \ref{thm:main}, if the dimension of the
space is large enough, there exists a PPT state far
from separable states. Our candidate for such a
state is $\rho^{\otimes n}$, where $\rho$ is an entangled PPT
state, and in the proof the only property of the set of PPT states that we
use, is that this set is closed under tensor product. Therefore,
the same argument as in the proof of Theorem \ref{thm:main},
gives us the following general theorem.

\begin{thm}\label{thm:general} Assume that $C$ is a necessary \emph{but not sufficient} separability
criterion such that if $\rho$ and $\sigma$ satisfy $C$, then
$\rho\otimes \sigma$ satisfies $C$ as well. Then for any
$\varepsilon>0$ there exists a state $\rho$ that satisfies $C$, and
whose trace distance from separable states is at least
$1-\varepsilon$.
\end{thm}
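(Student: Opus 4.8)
The plan is to observe that Theorem~\ref{thm:general} follows by simply rerunning the argument of Section~\ref{sec:proof} verbatim, with the phrase ``PPT state'' replaced throughout by ``state satisfying $C$''. First I would note that, since $C$ is a necessary but not sufficient separability criterion, there exists a Hilbert space $\mathcal{H}=\mathcal{H}^A\otimes\mathcal{H}^B$ (of sufficiently large dimension) and a state $\rho$ acting on $\mathcal{H}$ that satisfies $C$ but is not separable; hence $\epsilon=\min_{\sigma\in\SEP}\|\rho-\sigma\|_{\Tr}>0$ as in Eq.~(\ref{eq:dist}). The key structural input is the hypothesis that $C$ is closed under tensor product: by induction, if $\rho$ satisfies $C$ then $\rho^{\otimes n}$ satisfies $C$ for every $n$. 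So it suffices to show that the trace distance of $\rho^{\otimes n}$ from $\SEP$ tends to $1$ as $n\to\infty$, and our candidate states are $\rho^{(n)}=\rho^{\otimes n}$.

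From this point the proof is identical to that of Theorem~\ref{thm:main}, because that argument never used any property of PPT states beyond closure under tensor product. I would restate the main steps in order: (i) since $\rho^{\otimes n}$ is symmetric, by averaging over permutations one may take the closest separable state $\sigma^{(n)}$ to be symmetric; (ii) embed into $n+n^2$ copies, so that tracing out $n^2$ registers of the closest symmetric separable state to $\rho^{\otimes(n+n^2)}$ only decreases the trace distance (Eq.~(\ref{eq:10})), and apply the finite quantum de Finetti theorem (Theorem~\ref{thm:fqdt}) to write $\Tr_{1\dots n^2}\sigma^{(n+n^2)}=\int\mu(d\tau)\tau^{\otimes n}+X_n$ with $\|X_n\|_{\Tr}\le 2d\,\frac{n}{n+n^2}\to 0$; (iii) run local quantum state tomography on $n-1$ copies, building the operator $A_n\le I$ supported on tomographic outcomes inside $B_{\epsilon/2}(\rho)$, so that $\Tr(I\otimes A_n\cdot\rho^{\otimes n})\to 1$ by the law of large numbers; (iv) bound $\Tr\big[(I\otimes A_n)\cdot(\int\mu(d\tau)\tau^{\otimes n}+X_n)\big]\to 0$ by the same splitting into the mass $c_n$ near $\rho$ and the remainder $Y_n$, using that the weighted average of $\tau$'s in $B_{\epsilon/2}(\rho)$ is a separable state $\epsilon/2$-close to $\rho$, which forces $c_n\le 6\epsilon^{-1}(\delta_n+2d\frac{n}{n+n^2})\to 0$. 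Combining (iii) and (iv) with the triangle inequality gives $\|\rho^{\otimes n}-\sigma^{(n)}\|_{\Tr}\to 1$, so for $n$ large enough $\rho^{\otimes n}$ satisfies $C$ and has trace distance at least $1-\epsilon$ from $\SEP$, proving the theorem.

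The only genuinely new point to verify — and the step I would be most careful about — is that the tomography argument still goes through when ``PPT'' is replaced by an arbitrary $C$. Looking back at Section~\ref{sec:proof}, the tomography step uses only that one can perform an informationally complete POVM \emph{locally} on the bipartite system and that this leaves separable states separable; it does not reference $C$ at all. Likewise the de Finetti step and the averaging step are about symmetry and partial trace, not about $C$. So the hypothesis ``$C$ closed under tensor product'' is exactly what is needed to guarantee $\rho^{\otimes n}$ is an admissible witness, and nothing else about $C$ enters. Thus there is no real obstacle; the content of Theorem~\ref{thm:general} is the \emph{recognition} that the proof of Theorem~\ref{thm:main} was already this general, and I would present it as such, remarking afterwards that reduction, majorization, and symmetric-extension criteria (to a fixed number of copies) all satisfy the tensor-product closure hypothesis and hence fall under Theorem~\ref{thm:general}.
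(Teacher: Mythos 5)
Your proposal is correct and is essentially the paper's own proof: the paper likewise observes that an entangled state $\rho$ satisfying $C$ exists, that closure under tensor product makes $\rho^{\otimes n}$ an admissible witness, and that the argument of Section~\ref{sec:proof} applies verbatim since it uses no property of PPT beyond tensor-product closure. Your more detailed walk-through of steps (i)--(iv) just makes explicit what the paper states in one line.
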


\begin{proof} Let $\rho$ be an entangled state which satisfies $C$.
Then $\rho^{\otimes n}$ satisfies $C$ as well, and by the proof of
Theorem \ref{thm:main}, the trace distance of $\rho^{\otimes n}$
from separable states tends to $1$ as $n$ goes to infinity.
\end{proof}

\vspace{.3cm}

In the following theorem we prove that all separability criteria
mentioned in Section \ref{sec:prel} satisfy the assumption of
Theorem \ref{thm:general}.

\begin{thm}\label{thm:example} For all separability
criteria mentioned in Section \ref{sec:prel} there exists an
entangled state which passes the test while it is arbitrarily far,
in trace distance, from separable states.
\end{thm}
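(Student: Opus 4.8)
The plan is to invoke Theorem \ref{thm:general}: for each criterion $C$ listed in Section \ref{sec:prel} it suffices to exhibit a single entangled state that satisfies $C$, together with the closure of the set $\{\rho : \rho \text{ satisfies } C\}$ under tensor product. The existence of an entangled state passing each test is classical: for PPT, reduction, entropic, majorization and cross-norm criteria one may take a suitable bound-entangled state (e.g.\ the $3\times 3$ example of \cite{bound}, which is PPT, hence also satisfies the weaker reduction, majorization, entropic and cross-norm conditions since PPT implies all of these); for the symmetric extension criterion with parameter $k$, one takes an entangled state with a symmetric extension to $k$ copies but not to $k+1$ (such states exist by \cite{symmetric1, symmetric2}, or simply any PPT entangled state which has a PPT-symmetric extension at low levels of the hierarchy). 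So the real content is the tensor-product closure of each criterion.

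I would verify the tensor stability criterion-by-criterion. For PPT this is immediate: $(\rho\otimes\sigma)^{T_B} = \rho^{T_B}\otimes\sigma^{T_B}$, and the tensor product of positive semidefinite matrices is positive semidefinite. For the reduction criterion, if $I\otimes\rho_B \geq \rho_{AB}$ and $I\otimes\sigma_B\geq\sigma_{AB}$, then since $\rho_{AB}, \sigma_{AB}\geq 0$ and the tensor product is monotone on positive operators, $(I\otimes\rho_B)\otimes(I\otimes\sigma_B) \geq \rho_{AB}\otimes\sigma_{AB}$, and after regrouping the $A$ and $B$ factors the left-hand side is $I_{AA'}\otimes(\rho_B\otimes\sigma_B)$, which is exactly the reduced operator of $\rho_{AB}\otimes\sigma_{AB}$; hence the reduction criterion is tensor-stable. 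The entropic and majorization criteria follow from additivity/multiplicativity of the relevant spectral quantities under tensor products: $S_\alpha$ is additive, so $S_\alpha(\rho\otimes\sigma) = S_\alpha(\rho)+S_\alpha(\sigma)$ and the corresponding reduced-state inequalities add; for majorization one uses that if $\lambda^\downarrow_{\rho_A}\succ\lambda^\downarrow_{\rho_{AB}}$ and $\lambda^\downarrow_{\sigma_A}\succ\lambda^\downarrow_{\sigma_{AB}}$ then the same majorization holds for the products, since the eigenvalue list of a tensor product is the sorted list of pairwise products and majorization is preserved under this tensoring operation. For the cross-norm (realignment) criterion, the map $\mathcal{U}$ satisfies $\mathcal{U}(\rho\otimes\sigma) = \mathcal{U}(\rho)\otimes\mathcal{U}(\sigma)$ up to a fixed permutation of rows and columns (a consequence of how $v(\cdot)$ behaves under Kronecker products), so $\Tr|\mathcal{U}(\rho\otimes\sigma)| = \Tr|\mathcal{U}(\rho)|\cdot\Tr|\mathcal{U}(\sigma)| \leq 1$. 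Finally, for symmetric extension to $k$ copies: if $\rho_{AB}$ and $\sigma_{A'B'}$ have symmetric extensions $\rho_{AB_1\cdots B_k}$ and $\sigma_{A'B'_1\cdots B'_k}$, then $\rho_{AB_1\cdots B_k}\otimes\sigma_{A'B'_1\cdots B'_k}$, regarded as a state on systems $AA'$ and $(B_iB'_i)_{i=1}^k$, is an extension of $\rho_{AB}\otimes\sigma_{A'B'}$ that is invariant under the diagonal action of $S_k$ permuting the pairs $B_iB'_i$; hence the tensor product has a symmetric extension to $k$ copies.

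The main obstacle I anticipate is the realignment (cross-norm) criterion, where the claimed identity $\mathcal{U}(\rho\otimes\sigma) = (\text{permutation})\,\mathcal{U}(\rho)\otimes\mathcal{U}(\sigma)\,(\text{permutation})$ requires a careful bookkeeping of index orderings in the definition of $v(X)$ and $\mathcal{U}(M\otimes N) = v(M)v(N)^T$; one must check that the permutation relating the two sides is a genuine permutation matrix on both sides (so that it does not change the trace norm) rather than merely a linear isometry in some weaker sense. A cleaner route is to use the known characterization that $\Tr|\mathcal{U}(\rho)| = \sum_k s_k$ where $s_k$ are the singular values of the realigned matrix of $\rho$, combined with the fact that realignment and the singular-value decomposition both behave multiplicatively under tensor product; this sidesteps the explicit index chase. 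For the symmetric extension criterion, one small point to state carefully is that the group acting is $S_k$ acting diagonally on the $k$ composite registers $B_iB'_i$, not the larger group $S_k\times S_k$; the diagonal action is all that the criterion requires, so the argument goes through. Once all five verifications are in place, Theorem \ref{thm:general} applies verbatim to each, completing the proof.
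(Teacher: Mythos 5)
Your proposal is correct and follows essentially the same route as the paper: invoke Theorem \ref{thm:general} and verify tensor-product closure criterion by criterion, using the same key facts (monotonicity of the tensor product on positive operators for the reduction criterion, additivity of $S_\alpha$, preservation of majorization under tensoring, factorization of $\mathcal{U}$, and tensoring of symmetric extensions). Your extra care about the permutation hidden in $v(X\otimes X')$ for the cross-norm criterion, and your explicit treatment of the existence of entangled states passing each test, are reasonable refinements of details the paper passes over quickly, but they do not change the argument.
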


\begin{proof} By Theorem \ref{thm:general} it is sufficient to
prove that those separability criteria are closed under tensor
product.

\begin{itemize}
\item Reduction criterion: Let $X, Y, Z$ and $W$ be positive
semi-definite matrices such that $X\geq Y$ and $Z\geq W$. Then
$(X-Y)\otimes (Z+W)$ and $(X+Y)\otimes (Z-W)$ are positive
semi-definite. Therefore $X\otimes Z -Y\otimes W = \frac{1}{2}[
(X-Y)\otimes (Z+W) + (X+Y)\otimes (Z-W) ]$ is positive
semi-definite. It means that if $X\geq Y$ and $Z\geq W$, then
$X\otimes Z\geq Y\otimes U$. Now assume that $\rho_{AB}$ and
$\sigma_{A'B'}$ pass the reduction criterion. Therefore $\rho_A\otimes
I \geq \rho_{AB}$ and $\sigma_{A'}\otimes I \geq \sigma_{A'B'}$, and
then $ \rho_{A}\otimes \sigma_{A'} \otimes I \geq \rho_{AB}\otimes
\sigma_{A'B'}$, which means that $\rho_{AB}\otimes \sigma_{A'B'}$ satisfies the
reduction criterion.

\item Entropic criterion: It follows easily from
$S_\alpha(\rho\otimes \sigma)=S_\alpha(\rho)+ S_{\alpha}(\sigma)$.

\item Majorization criterion: $x \prec y$ if and only if there
exists a doubly-stochastic matrix $D$ (a matrix all of whose entries are positive, and the
sum of entries on any row and column is equal to $1$) such
that $x=Dy$ (see \cite{chuang} page 575). Therefore, if $x \prec y$
and $x' \prec y'$, there exist $D$ and $D'$ such that $x=Dy$ and
$x'=D'y'$. Hence $x\otimes x'=(D\otimes D')(y\otimes y')$ and
then $x\otimes x'\prec y\otimes y'$. The proof follows easily
using this property.

\item Cross norm criterion: Using $v(X\otimes X')=v(X)\otimes
v(X')$ we have $\mathcal{U}((X\otimes X')\otimes (Y\otimes Y')) =
\mathcal{U}(X\otimes Y)\otimes \mathcal{U}(X'\otimes Y')$. The
proof follows from this equation.

\item Symmetric extension criterion: If $\rho^{(k)}$ and $\sigma^{(k)}$ are
symmetric extensions of $\rho$ and $\sigma$ to $k$ copies
respectively, then $\rho^{(k)}\otimes \sigma^{(k)}$ is a
symmetric extension of $\rho\otimes \sigma$ to $k$ copies.
\end{itemize}
\end{proof}

\section{Conclusion}

We have proved that for any separability criterion
that is closed under tensor product, the
set of states that pass the test, is not a good approximation of
the set of separable states. For the special case
of positive partial transpose test,
we have shown that the sets of PPT states and separable states have
totally different shapes.
A problem that arises naturally is to
find a separability criterion which is not weaker than the known
ones, and also is not closed under tensor product. Finding such a separability test
may clarify the complexity of the separability problem: is it
$\NP$-hard to decide whether there exists a separable state whose
trace distance from a given state is less than a given constant $c$?

\hspace{.5in}

\noindent {\bf Acknowledgements.} Authors are grateful to Karol {\.Z}yczkowski and Stanis{\l}aw J. Szarek for
providing some background about the comparison of the volume of the sets of separable states and PPT states.  SB is also thankful to Barbara Terhal
for useful discussions. 


\end{document}